\title{Sampling Multiple Edges Efficiently\footnote{A proceedings version of this paper is to appear in RANDOM 2021.}}
\author{%
    Talya Eden \thanks{CSAIL at MIT, \textit{talyaa01@gmail.com}. This work was supported by the National Science Foundation under Grant No.  CCF-1740751, the Eric and Wendy Schmidt Fund for Strategic Innovation, and Ben-Gurion University of the Negev.}
    \and Saleet Mossel \thanks{CSAIL at MIT, \textit{saleet@mit.edu}.}
    \and Ronitt Rubinfeld \thanks{CSAIL at MIT, \textit{ronitt@csail.mit.edu}. This work was supported by 
     the NSF TRIPODS program (awards CCF-1740751 and DMS 2022448),
NSF award CCF-2006664, and by the Fintech@CSAIL Initiative.}
}
\newcommand{\alg}[1]{\begin{figure*}[htb]
        \centering
		\fbox{
			\begin{minipage}{0.9\textwidth}{
	#1	
}
	\end{minipage}
}
\end{figure*}
}
\newcommand{\eps}{\epsilon}
\newcommand{\poly}{{\rm poly}}
\renewcommand{\th}{^\textrm{th}}
\newcommand{\om}{\overline{m}}
\newcommand{\og}{\overline{\gamma}}
\newcommand{\e}{\eps}
\newcommand{\EX}{\mathbb{E}}
\newcommand{\abs}[1]{\left| #1\right|}
\newcommand{\davg}{d_{\mathsf{avg}}}
\newcommand{\od}{\overline{d}_{\mathsf{avg}}}
\newcommand{\ox}{\overline{x}}
\def\eps{\epsilon}
\def\dmax{d_{\rm max}}
\newtheorem{thm}{Theorem}[section]
 \newtheorem{lemma}[thm]{Lemma}
 \newtheorem{claim}[thm]{Claim}
 \newtheorem{definition}[thm]{Definition}
\newcommand{\link}[2]{\hyperref[#1]{ \color{black} #2}}
\newcommand{\sets}{ \frac{n}{\tau}\cdot \frac{35 \log(6 n t/\delta)}{\epsilon^2}}
\newcommand{\setgamma}{\frac{m(S)}{\davg\cdot|S|}}
\newcommand{\setog}{\frac{m(S)}{\od\cdot |S|}}
\newcommand{\rtlog}{\frac{\log^2(n\log(1/\delta)/\delta)}{\eps}}
\newcommand{\rt}{O\left(\max\left\{ \frac{n}{\davg \cdot x}, \sqrt{\frac{n}{\od}} \right\}\cdot  \rtlog\right)}
\newcommand{\SL}{\hyperref[alg:SampleH]{\color{black} \bf Sample-Light}}
\newcommand{\SH}{\hyperref[alg:SampleL]{\color{black} \bf Sample-Heavy}}
\newcommand{\SE}{\hyperref[alg:SampleE]{\color{black} \bf Sample-Uniform-Edge}}
\newcommand{\VH}{V_{>\tau}}
\newcommand{\VL}{V_{\leq \tau}}
\newcommand{\settau}{\frac{\ox\cdot\od}{\eps}}
\newcommand{\maxox}{\sqrt{{n}/{\od}}}
\newcommand{\EL}{E_{\leq \tau}}
\newcommand{\EH}{E_{>\tau}}
\newcommand{\good}{good}
\def\nbr{\Gamma}
\newcommand{\Preprocessing}{\hyperref[alg:Preprocessing]{\color{black} \bf Preprocessing}}
\begin{document}

\maketitle

\begin{abstract}
We present a sublinear time algorithm that allows one to sample multiple edges from a distribution that is  pointwise $\epsilon$-close to the uniform distribution, in an \emph{amortized-efficient} fashion. 
We consider the adjacency list query model, where access to a graph $G$ is given via degree and neighbor queries.

The  problem of sampling a single edge in this model has been raised by Eden and Rosenbaum (SOSA 18). 
Let $n$ and $m$ denote the number of vertices and edges of $G$, respectively.
Eden and Rosenbaum provided upper and lower bounds of $\Theta^*(n/\sqrt m)$
for sampling a single edge in general graphs (where $O^*(\cdot)$ suppresses $\poly(1/\epsilon)$ and $\poly(\log n)$ dependencies). We ask whether the query complexity lower bound  for sampling a single edge can be circumvented when multiple samples are required. 
That is, can we get an improved amortized per-sample cost if we allow a  preprocessing phase? We answer in the affirmative. 

We present an algorithm that,  
if one knows the number of required samples $q$ in advance,
has an overall cost that is sublinear in $q$, namely,  $O^*(\sqrt q \cdot(n/\sqrt m))$, which is strictly preferable to $O^*(q\cdot (n/\sqrt m))$ cost resulting from $q$ invocations of the algorithm by Eden and Rosenbaum. 

Subsequent  to a preliminary version of this work, T\v{e}tek and Thorup (arXiv, preprint) proved that this bound is essentially optimal.  
\end{abstract}

\section{Introduction}
\label{sec:intro}
\sloppy
The ability to select edges uniformly at random in a large graph or network, namely \emph{edge sampling}, is  an important primitive,
interesting both from a theoretical perspective in various models of computation (e.g., ~\cite{jowhari2011tight, ahmed2013network, Aliak,ahmed2017sampling,ER18, EdenR18-approx,ERR19,AKK19, Peng20}), and from a practical perspective  
in the study of  real-world networks (e.g., ~\cite{kashtan2004efficient, Leskovec2006, wang2011understanding, cooper2014estimating,turkoglu2017edge}).
We consider the task of outputting edges from a distribution that is close to uniform;
more precisely, the output distribution on edges will be  \emph{pointwise} $\eps$-close to the uniform distribution, so that  each edge will be  returned with probability in $[\frac{1- \eps}{m}, \frac{1+ \eps}{m}]$.
Note that this is a {stronger} notion than the  more standard notion of $\eps$-close to uniform in \emph{total variation distance} (TVD).\footnote{See Section~\ref{sec:pointwise-TVD} for a detailed discussion comparing TVD-closeness to pointwise closeness.}
We consider this task in the sublinear setting, specifically, in the  adjacency list query model, where the algorithm can perform uniform  vertex queries,  as well as  degree and neighbor queries.

Three recent algorithms have been presented for this problem in the adjacency list model. The first, by Eden and Rosenbaum~\cite{ER18}, is an  $O^*(n/\sqrt{m})$ query complexity\footnote{We note that in all the mentioned algorithms the running time is asymptotically equal to the query complexity, and therefore we limit the discussion to query complexity.} algorithm that works in general graphs.\footnote{Throughout the paper $O^*(\cdot)$ is used to suppresses $\poly(\log n/\epsilon)$ dependencies.} 
This was later refined by Eden, Ron, and Rosenbaum~\cite{ERR19} to an $O^*(m\alpha/n)$ algorithm for graphs that have arboricity\footnote{The arboricity of a graph is the minimal number of forests required to cover its edge set.} at most $\alpha$ (where it is assumed that $\alpha$ is given as input to the algorithm).
Finally, in~\cite{tetek2020sampling}, T\v{e}tek and Thorup combined techniques from the previous two works and presented  the state of the art algorithm for  sampling a single 
edge. This algorithm  exponentially improves on the dependency in $1/\eps$ compared to the algorithm by~\cite{ER18}. 
All of these algorithms were also shown to be essentially optimal if one is interested in outputting a \emph{single} edge sample. 
Naively, to sample $q$ edges in general graphs, one can invoke the \cite{tetek2020sampling}
algorithm $q$ times, with expected complexity $O^*(q\cdot (n/\sqrt m))$. In this paper, we prove that this query complexity can be improved to $O^*(\sqrt q\cdot (n/\sqrt m))$. 
That is, we prove that there  exists an algorithm with a better \emph{amortized} query complexity.

\subsection{Results}
 We  present an  algorithm that returns an edge from a distribution that is pointwise $\eps$-close to uniform, and efficiently supports many edge sample invocations.
Assuming one knows in advance the number of required edge samples $q$, 
the overall cost of $q$ edge samples is $O^*(q\cdot ( n/\sqrt m)+q)=O^*(q\cdot (n/\sqrt m))$, where the equality is since we can assume that $q=O(n^2/m)$.\footnote{Observe that  if the number of required samples $q$ exceeds $n^2/m$, then one an simply perform $O(n^2\log n/m)$ uniform pair queries and with high probability recover all edges in the graph. Hence, we can assume that $q\leq n^2/m$, and so the term $q$ does not asymptotically affect the complexity.}
Subsequent to a preliminary version of this work,  T\v{e}tek and Thorup~\cite[Theorem 15 ]{tetek2020sampling}
proved that the above result is  essentially optimal.

Our algorithm is based on two procedures: a preprocessing procedure that is invoked once, and a sampling procedure which is invoked whenever an edge sample is requested. 
There is a trade-off between the preprocessing cost and per-sample cost of the sampling procedure. Namely, for a trade-off parameter $x\geq 1$, which can be given as input to the algorithm,  the preprocessing query complexity is  $O^*(n^2/(m\cdot x))$ and the per-sample cost of the sampling procedure is $O(x/\eps)$.

\begin{thm}[Informal.]\label{thm:main}
Let $G$ be a graph over $n$ vertices  and $m$ edges. 
Assume access to $G$ is given via the adjacency list query model.
There exists an algorithm that, 
given 
an approximation parameter $\eps$
 and a trade-off parameter $x$, has two procedures: a preprocessing procedure, and a sampling procedure. 
The sampling procedure
outputs an edge from a distribution that is pointwise $\eps$-close to uniform.
The  preprocessing procedure has $O^*(n^2/(m \cdot x))$ expected query complexity, and the  expected per-sample query complexity of the sampling procedure is $O(x/\eps)$.
\end{thm}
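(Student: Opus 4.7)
The plan is to partition the edges into \emph{heavy} and \emph{light} via a threshold $\tau = \Theta(x\cdot\od/\eps)$, declaring a vertex $v$ heavy if $d(v)>\tau$ and an edge heavy if it has at least one heavy endpoint. Since $\sum_{v\in\VH} d(v)\le 2m$, we get $|\VH|\le 2m/\tau = \Theta(n\eps/x)$, so the heavy part is small enough to be digested by a preprocessing of budget $O^*(n^2/(mx))$, whereas the light part can be attacked by uniform vertex--neighbor sampling with rejection probability $d(v)/\tau = O(x/\eps)$.

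The preprocessing will, within budget $O^*(n^2/(mx))$, (a) estimate $\davg$ (equivalently $m$) to within a $(1\pm\eps)$ factor using a degree-estimation subroutine tuned by the parameter $x$ (this is the regime captured by the macro $\rt$), (b) discover $\VH$ together with exact degrees, and (c) build an alias/prefix-sum structure allowing $O(1)$-time sampling of $v\in\VH$ with probability proportional to $d(v)$. Step (c) is routine given (b); for (b) I would take $\Theta((n/\tau)\log(n/\delta))$ vertex--neighbor pairs and record the degree of each endpoint, arguing that every heavy $u$ appears as an endpoint of a random pair with probability $\sum_{v\sim u}1/(n\,d(v)) = \Omega(d(u)/(n\tau))$ once we condition on the common case that most of $u$'s neighbors are light, and then conclude by a union bound over at most $n\eps/x$ heavy vertices. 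A few boundary cases (e.g., $x\gtrsim\maxox$ so that there are essentially no heavy vertices, or a heavy vertex whose neighbors are all heavy) are handled separately by a small additional sweep that stays within the same asymptotic budget.

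For each sample request, I would flip a coin whose bias, computed from the preprocessing, reflects the mass of heavy edges relative to $m$. In the \emph{heavy branch}, pick $v\in\VH$ with probability $d(v)/\sum_{u\in\VH} d(u)$ via the prefix-sum structure, pick a uniform neighbor $w$, and if $w\in\VH$ accept with probability $1/2$ (to cancel double counting of heavy--heavy edges), otherwise accept. In the \emph{light branch}, pick a uniform vertex $v$, reject if $v\in\VH$, pick a uniform neighbor $w$, reject if $w\in\VH$, and finally accept with probability $d(v)/\tau$. Each specific light edge is output per attempt with probability $\Theta(1/(n\tau))$ and each heavy edge with probability $\Theta(1/\sum_{u\in\VH}d(u))$; matching these two rates to $1/m$ up to $(1\pm\eps)$ is the job of the biased coin. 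The expected number of attempts before any acceptance in the light branch is $O(\tau/\od)=O(x/\eps)$.

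The main obstacle is obtaining \emph{pointwise} $\eps$-closeness, not merely TVD closeness. Because pointwise errors do not average across edges, a single missed heavy vertex would contaminate the output probabilities of all of its incident edges; thus preprocessing must capture $\VH$ with probability $1-1/\poly(n)$, and both $\od$ and the heavy/light coin bias must be computed to within $(1\pm\eps)$ on every accepting path. I would handle these by amplifying all discovery and estimation subroutines to $1-1/\poly(n)$ success probability (the extra $\log n$ factors are absorbed into $O^*$), and by using one global $(1\pm\eps)$-estimate of $m$ from preprocessing when setting the coin bias. A final technicality is that the $O(x/\eps)$ per-sample cost is in expectation; I would truncate the light-branch rejection loop at $\Theta((x/\eps)\log n)$ iterations and show that the resulting catastrophic-failure mass is only $1/\poly(n)$, hence contributes $\eps/\poly(n)$ to the pointwise error.
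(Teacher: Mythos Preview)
Your proposal has a genuine gap in step~(b) of the preprocessing, the discovery of $\VH$. You sample $\Theta((n/\tau)\log(n/\delta))$ vertex--neighbor pairs and argue that each heavy $u$ appears as the second endpoint with probability $\sum_{v\sim u}1/(n\,d(v))=\Omega(d(u)/(n\tau))$, \emph{provided} most of $u$'s neighbors are light. But for $\tau\ll\sqrt m$ (precisely the regime $x\ll n/\sqrt m$ that makes the result interesting) there is no such guarantee: a heavy vertex $u$ with $d(u)=\tau+1$ can have all of its neighbors heavy, each of degree up to $2m/\tau$, and then $\sum_{v\sim u}1/(n\,d(v))\approx \tau^2/(2nm)$, which is smaller than the required $\Theta(1/n)$ by a factor $\tau^2/m\ll 1$. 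Your ``small additional sweep'' is not specified, and I do not see how to implement it within budget: enumerating neighbors of discovered heavy vertices costs up to $\sum_{v\in\VH}d(v)$, which can be $\Theta(m)$, and sampling neighbors does not help because a single undiscovered heavy $u$ can hide among $\Theta(m/\tau)$ neighbors of each discovered $v$. Since you correctly observe that pointwise closeness cannot tolerate even one missed heavy vertex (your light branch rejects any edge touching $\VH$, so an undiscovered heavy $u$ with a light neighbor yields an edge that is \emph{never} output), this is fatal.

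The paper sidesteps the discovery problem entirely. It never attempts to enumerate $\VH$; instead, it samples a uniformly random multiset $S$ of size $\Theta((n/\tau)\cdot\log(n)/\eps^2)$ and builds a structure to draw a uniform edge of $E(S)$ in $O(1)$ time. The key point is that for every heavy $u$, the count $d_S(u)=|\Gamma(u)\cap S|$ is a sum of independent indicators with mean $|S|\,d(u)/n\ge\Theta(\log(n)/\eps^2)$, so by Chernoff it concentrates to within $(1\pm\eps)$ simultaneously for all heavy $u$. Then, at sample time, drawing $(v,u)$ uniformly from $E(S)$ and checking whether the second endpoint $u$ is heavy hits each heavy $u$ with probability $d_S(u)/m(S)\in(1\pm\eps)\,d(u)\,|S|/(n\,m(S))$, regardless of whether $u$'s neighbors are light or heavy. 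This is what lets the threshold $\tau$ be pushed below $\sqrt m$; your approach, by contrast, implicitly needs $\tau=\Omega(\sqrt m)$ for the ``most neighbors are light'' argument to go through, which is exactly the \cite{ER18} regime and gives no amortization.
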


As mentioned previously, this result is essentially optimal, due to a lower bound by T\v{e}tek and Thorup. 

\begin{thm}[Theorem 15 in~\cite{tetek2020sampling}, restated]
Let $\eps$ be some small constant $0<\eps<1$.
Any  algorithm that  samples $q$ edges from a distribution  that is  pointwise $\eps$-close  to uniform  in  the adjacency list query  model must perform $\Omega(\sqrt q\cdot (n/\sqrt m))$ queries.
\end{thm}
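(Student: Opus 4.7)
The plan is to apply Yao's minimax principle: construct a distribution $\mathcal{D}$ over $n$-vertex, $m$-edge graphs and show that any deterministic algorithm making fewer than $T = c\sqrt{q}\cdot n/\sqrt{m}$ queries in total (preprocessing plus all per-sample queries) fails, with constant probability over $G\sim\mathcal{D}$, to produce $q$ outputs whose marginals are pointwise $\eps$-close to uniform on $E(G)$.

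For the hard distribution, take a \emph{balanced multi-clique} graph: place $k=\Theta(q)$ vertex-disjoint cliques, each on $s=\Theta(\sqrt{m/q})$ vertices, on a uniformly random subset of the $n$ vertex labels, leaving the remaining $n-ks$ vertices isolated. Then $|E|=k\binom{s}{2}=\Theta(m)$ and $ks=\Theta(\sqrt{qm})\le n$ under the standing assumption $q\le n^2/m$. Two natural strategies meet the claimed bound on this instance: the ``pick a uniform vertex, return a random neighbor'' rejection sampler (each sample costs $\Theta(n/(ks))=\Theta(n/\sqrt{qm})$ queries in expectation, for a total of $\Theta(\sqrt{q}\cdot n/\sqrt{m})$ over $q$ samples), and the full-preprocessing strategy (find all cliques via coupon-collection in $\Theta(n/s)=\Theta(\sqrt{q}\cdot n/\sqrt{m})$ queries, then sample in $O(1)$).

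The lower bound rests on two observations. \textbf{(i) Discovery bound.} Because clique labels are uniformly random, a uniform-vertex query lands in a fixed clique $C_j$ with probability $s/n$, and neighbor/degree queries from a vertex in $C_j$ only reveal further vertices of $C_j$. Hence, an algorithm making $T$ adaptive queries discovers (sees at least one vertex of) at most $T\cdot ks/n=T\sqrt{qm}/n$ cliques in expectation, with Chernoff concentration. \textbf{(ii) Symmetry and pointwise closeness.} For any pair of vertex labels $(u,v)$ both absent from the algorithm's transcript, the random-labeling symmetry forces the algorithm's output distribution to assign $(u,v)$ the same probability as any other never-queried pair, hence at most $1/\binom{n-T}{2}=O(1/n^2)$. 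But pointwise $\eps$-closeness demands probability $\ge (1-\eps)/m$ on every actual edge; in the sparse regime $m\ll n^2$, this forces every edge --- and hence every clique --- to be ``touched'' by the transcript. Combining (i) and (ii): if $T<c\sqrt{q}\cdot n/\sqrt{m}$ for a small enough constant $c$, then $\Omega(k)=\Omega(q)$ cliques remain undiscovered at the time of the final sample, and by (ii) the final sample's marginal cannot be pointwise $\eps$-close on those cliques' edges, a contradiction. This yields $T=\Omega(\sqrt{q}\cdot n/\sqrt{m})$.

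The main obstacle is making the symmetry argument rigorous against the algorithm's adaptivity across the combined preprocessing and sampling transcript: the algorithm can distribute its queries freely, and may discover a clique \emph{during} a sample call in order to output an edge from it. The symmetry claim must therefore be applied to the total transcript up through whichever sample one wishes to violate, which is naturally handled by coupling the algorithm's adaptive queries with a uniformly random permutation of ``not-yet-queried'' labels (the standard Yao-principle swap). Formalizing this coupling, and in particular ruling out that the algorithm ``guesses'' an edge of an undiscovered clique by coincidence, is where the proof will concentrate its technical effort.
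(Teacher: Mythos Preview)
The paper does not prove this theorem. It is stated purely as a citation of a result due to T\v{e}tek and Thorup~\cite{tetek2020sampling}, established subsequent to a preliminary version of the present paper; the authors include it only to certify that their upper bound is essentially tight. There is therefore no ``paper's own proof'' to compare your proposal against.

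As an independent remark on your sketch: be careful with the invocation of Yao's principle for a \emph{sampling} guarantee. A deterministic algorithm on a fixed input outputs a fixed edge sequence, which is a point mass and hence never pointwise $\eps$-close to uniform; so the naive Yao reduction (``fix the coins, average over inputs'') trivializes. Lower bounds for pointwise-close sampling are typically obtained either by keeping the algorithm randomized and arguing directly about the posterior symmetry of unqueried labels, or by reducing sampling to a decision/estimation task (e.g., distinguishing two input distributions, as in the single-sample lower bound of~\cite{ER18}) to which Yao then applies cleanly. Your hard instance (many small cliques on random labels) and your discovery-count argument are on the right track, but the ``standard Yao-principle swap'' you allude to in the final paragraph needs to be replaced by one of these formulations; otherwise the step from ``few cliques discovered'' to ``some marginal is far from uniform'' does not go through as written.
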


To better understand how the complexity of our upper bound compares to what was previously known, we give some possible instantiations.
     First, setting $x=n/\sqrt{m}$ implies a preprocessing phase with  $O^*(n/\sqrt m)$ queries  and a cost of  $O(n/\sqrt m)$ per sample, thus recovering the bounds of~\cite{ER18}. 
     Second,  setting $x=1$ implies a preprocessing phase with $O(n^2/m)$ queries
and  a cost of  $O(1/\eps)$ per sample. This can be compared to the  naive approach
of querying the degrees of all the vertices in the graph, and then sampling each vertex with probability proportional to its degree and returning an edge incident to the sampled vertex.\footnote{Indeed, the naive approach returns an edge from a distribution that is \emph{exactly} uniform.}  
Hence, the naive approach yields an $O(n)$  preprocessing cost and $O(1)$ per-sample cost while our algorithm with $x=1$ yields an $O^*(n^2/m)=O^*(n/\davg)$ preprocessing and $O(1/\eps)$ per-sample cost, where $\davg$ denotes the average degree of the graph.

For a  concrete example, consider the case where $m=\Theta(n)$ and $q=O(\sqrt n)$ edge samples are required. Setting $x=n^{1/4}$  gives an overall cost of $n^{3/4}$ for sampling $q$ edges, where previously this would have required $O(n)$ queries (by either the naive approach, or performing $O(\sqrt n)$ invocations of the $O^*(n/\sqrt m)=O^*(\sqrt n)$ algorithm of \cite{ tetek2020sampling}).
In general, if the number of queries $q$ is known in advance, then setting $x=\frac{n/\sqrt m}{\sqrt q}$, yields that sampling $q$ edges has an overall cost of $O^*(\sqrt q \cdot (n/\sqrt m))$, where previously this would have required $O^*(q\cdot (n/\sqrt m))$ queries  resulting from $q$ invocations of the algorithm by  \cite{tetek2020sampling}. We discuss some more concrete applications in the following section.

\paragraph{From the augmented model to the general query model.}
Recently, it has been suggested by Aliakbarpour et al.~\cite{Aliak} to consider query models that also provide queries for uniform edge samples, and multiple algorithms have since been developed for this model, e.g., ~\cite{AKK19, Peng20, BER21,TetekTriangles}.

Currently, for ``transferring" results in models that allow  uniform edge samples  back to  models that do  not allow such queries in a black-box 
manner,\footnote{This is true for results for which pointwise-close to uniform edge samples are sufficient, as in the case in all the current sublinear results that rely on edge samples (that we know of).} 
one must either (1) pay  a multiplicative cost of $O^*(n/\sqrt m)$ per query (replacing each edge sample query in an invocation of the~\cite{ER18} algorithm for sampling edges), 
 (2) pay an additive cost of $O(n)$ (using the naive approach described above), 
or (3) pay an additive cost of $O^*(n^2/m)$ if pair queries\footnote{Pair queries return whether there is an edge between two vertices in the graph.} are allowed.\footnote{As one can sample \emph{all} edges in the graph with high probability using $O^*(n^2/m)$ uniform pair queries (by the coupon collector's argument), and then return from the set of sampled edges.} 

For example, the works by Assadi, Kapralov and Khanna~\cite{AKK19}, 
Fichtenberger, Gao and Peng~
\cite{Peng20}, and Biswas, Eden and Rubinfeld~\cite{BER21} give  algorithms that rely on edge samples  for the tasks of approximately counting and uniformly sampling
arbitrary subgraphs in sublinear time.  
Specifically, these works  assume the \emph{augmented} query model which allows for  vertex, degree, neighbor, pair as well as uniform edge samples queries. When only  vertex, degree, neighbor and pair queries (without uniform edge samples) are provided, this is referred to as the \emph{general} query model~\cite{KKR04}. 
Currently, there are no dedicated algorithms for these tasks in the general model, that does not allow edge samples.
For approximating the number  of $4$-cycles, denoted $\#C_4$, the algorithms of~\cite{AKK19,Peng20} have query complexity of $O^*(m^2/\#C_4)$. For a graph with $m=O(n)$ edges and $\#C_4=\Theta(n^{3/2})$ $4$-cycles, this results in an $O^*(\sqrt n)$ query complexity in the augmented model. 
Using our algorithm, we can set $q=O(\sqrt n)$,
and approximately count the number of $\#C_4$'s  in 
$O^*(n^{3/4})$ queries in the general query model, where previously to our results this would have cost $O(n)$ queries.
We note that this ``black-box" transformation from the augmented model to the general query model is not guaranteed to be  optimal in terms of the resulting complexity in the general model.  Indeed, dedicated algorithms for counting and sampling stars and cliques in the general model, prove  that this is not the case~\cite{GRS11,ERS19stars,eden2020approximating,eden2020faster, eden2020sampClqs,TetekTriangles}. 
Nonetheless, to the best of our knowledge, no other results are currently known for subgraphs apart from  stars or cliques, and so this approach provides the only known algorithms for arbitrary subgraph counting and sampling in the general model.

\paragraph{Pointwise vs. TVD.}\label{sec:pointwise-TVD}
A more standard measure of distance between two distributions $P$ and $Q$ is the total variation distance (TVD),  $d_{TV}(P,Q)=\frac{1}{2}\sum_{x \in \Omega}|P(x)-Q(x)|$. 
Observe that this is a strictly weaker measure. That is,
pointwise-closeness implies closeness in TVD.
Thus our algorithm immediately produce a distribution that is TVD close to uniform. 
However, being close to a distribution in TVD, does not imply pointwise-closeness.\footnote{E.g., a distribution that ignores $\eps/2$-fraction of the edges and is uniform on the rest is close in TVD to uniform, but clearly it is not pointwise close.}
Furthermore, in various settings, this weaker definition is not sufficient, as is the case in some of the applications we mentioned previously. 
For instance, 
the uniform edge samples in the algorithms 
of~\cite{AKK19,Peng20} cannot be replaced in a black-box manner by edge samples that are only guaranteed to be close to uniform in TVD. For a concrete example, consider the task of approximately counting the number of triangles.  Let $G=A\cup B$ be a graph, where $A$ is a bipartite subgraph over $(1-\eps)m$ edges, and $B$ is a clique  over $\eps m$ edges. An algorithm that returns a uniformly distributed edge in $A$ is close in TVD to uniform over the entire edge set of $G$. However, it does not allow one to correctly approximate the number of triangles in $G$, as the algorithm will never return an edge from the clique, which is where all the triangles reside.

\subsection{Technical Overview}

Sampling (almost) uniformly distributed edges is equivalent to sampling vertices with probability (almost) proportional to their degree $\frac{d(v)}{2m}$.\footnote{Since if every $v$ is sampled with probability in $(1\pm \eps)\frac{d(v)}{2m}$, performing one more uniform neighbor query from $v$ implies that each specific edge $(v,w)$ in the graph is sampled with probability in $(1\pm \eps)\cdot \frac{1}{2m}$.}  Hence, from now on we focus on the latter task.

Consider first the following naive procedure for sampling vertices with probability proportional to their degree.
Assume that $\dmax$, the maximum degree in the graph is known.  Query a vertex uniformly at random and return it with probability $\frac{ d(v)} {\dmax}$; otherwise, return fail. 
Then each vertex is sampled with probability $\frac{d(v)}{n\cdot \dmax}$.
Therefore, if we repeatedly invoke the above until a vertex is returned, then each vertex is returned  
with probability $\frac{d(v)}{2m}$, as desired. 
However, the expected number of attempts until a vertex is returned is  $O(\frac{n\cdot \dmax}{m})$ (since the overall success probability of a single attempt is $\sum_{v \in V}\frac{d(v)}{n\cdot \dmax}=\frac{2m}{n\cdot \dmax}$), which could be as high as $O(\frac{n^2}{m})$ when $\dmax=\Theta(n)$.

Our idea is to partition the graph vertices into \emph{light} and \emph{heavy}, according to some degree threshold $\tau$, that will play a similar role to that of $\dmax$ in the naive procedure above. 
Our algorithm has two procedures, a preprocessing procedure and a sampling procedure.  The preprocessing procedure is  invoked once in the beginning of the algorithm, and  the  sampling procedure  is invoked every time an edge sample is requested.
In the preprocessing procedure we construct     a data structure that will later be used  to sample heavy vertices. 
In the  sampling procedure, we repeatedly try to sample a vertex, each time either a light or a heavy with equal probability, until a vertex is returned. 
To sample light vertices, we invoke the above simple procedure with $\tau$ instead of $\dmax$. Namely, sample a uniform random vertex $v$, if $d(v)\leq \tau$, return it with probability $\frac{d(v)}{\tau}$.
To sample heavy vertices, we use the data structure constructed by  the  preprocessing procedure as will be detailed shortly.

In the preprocessing procedure, we sample a set $S$ of  $O\left(\frac{n}{\tau}\cdot\frac{\log n}{\eps^2}\right)$ vertices uniformly at random. We then construct a data structure that allows to sample edges incident\footnote{We say that an edge $(u,v)$ is incident to  $S$ if either $u$ or $v$ are in $S$.} to $S$ uniformly at random. 
It holds that with high probability for every heavy vertex $v$, its number of neighbors in $S$, denoted $d_S(v)$, is close to its expected value, $d(v)\cdot \frac{|S|}{n}$.
Also, it holds that with high probability the sum of degrees of the vertices in $S$, denoted $d(S)$,  is close to its expected value, $2m\cdot \frac{|S|}{n}$. 
Hence, to sample heavy vertices, we first sample an  edge $(u,v)$ incident to $S$ uniformly  at random (without loss of generality $u\in S$) and then we check if the second endpoint $v$ is heavy. If so, we return $v$, and otherwise we fail. 
By the previous discussion on the properties of $S$, it holds that every  heavy vertex is sampled with probability approximately $\frac{d_S(v)}{d(S)}\approx \frac{d(v)}{2m}$.

\subsection{Comparison to Previous Work}
For the sake of this discussion assume that $\epsilon$ is some small constant.
Most closely related to our work, is the algorithm of ~\cite{ER18}. Their algorithm  also works by partitioning the graph's vertices to \emph{light} and \emph{heavy} vertices according to their some degree threshold $\theta$. 
Their method of sampling light edges is identical to ours: one simply samples a vertex uniformly at random, and keeps it with probability $d(v)/\theta$. 
In our algorithm,  $\tau$ is the degree threshold for light and heavy vertices, so that $\tau$ and $\theta$ plays the same role. 
The difference between our works is in the sampling of heavy vertices.
To sample heavy vertices, the algorithm of~\cite{ER18} tries to reach heavy vertices by sampling light vertices, and then querying one of their neighbors uniformly at random.
For this approach to output heavy vertices with almost equal probability to light vertices, $\theta$ must be set to $\Omega(\sqrt m)$.
Our approach for sampling heavy vertices is different, and relies on the preprocessing phase, which later allows us to reach heavy vertices with $O(1)$ queries. 
This allows us, in a sense, to decouple the dependence of the threshold $\tau$ and the success probability of sampling light vertices.
Hence, we can allow to set the degree threshold $\tau$ to smaller values, which results in a more efficient per-sample complexity (at a cost of a preprocessing step).

The algorithm of~\cite{ERR19} also outputs a uniformly distributed single edge, however in graphs with bounded arboricity $\alpha$.
Here too the algorithm first defines light vertices, setting the threshold to $\Theta(\alpha)$.
Sampling heavy edge is then performed by starting at light vertices as before, but  taking longer random walks of length $\ell$, for $\ell$ chosen uniformly in $[\log n]$. This method was later used by T\v{e}tek~\cite{tetek2020sampling} to exponentially improve the dependence in $\eps$ of  sampling a single edge in the general setting. It is an interesting open question whether there exists an algorithm for sampling multiple edges in bounded arboricity graphs which has better complexity than the algorithm of this work.

\subsection{Further Related Work}
We note that some of the related works were already mentioned, but we list them again for the sake of completeness.

\paragraph{Sampling edges in the adjacency list model.} 
As discussed previously, the most related work to ours is that of~\cite{ER18} for sampling a single edge  from an almost uniform distribution in general graphs  in $O^*(n/\sqrt m)$ expected time. This was later refined by Eden, Rosenbaum and Ron~\cite{ERR19} to an $O^*(n\alpha/m)$ expected time algorithm in bounded arboricity graphs, where a bound $\alpha$ on the arboricity of the graph at question is also given as input to the algorithm.\footnote{Note that since for all graphs $\alpha\leq \sqrt m$, this results is always at least as good as the previous one.} 
Recently, T\v{e}tek and Thorup~\cite{tetek2020sampling} proved that the dependency in $\epsilon$ in the algorithm of~\cite{ER18} could be improved from $1/\sqrt \epsilon$ to $\log(1/\epsilon)$. 
They further proved (subsequent to our work) that given additional  access to what they refer to as hash-based neighbor  queries, there exists an algorithm for sampling multiple edges (with and without replacement) from the exactly uniform distribution in $O^*(\sqrt q\cdot (n/\sqrt m))$ time.

\paragraph{The augmented edge samples model.} 
In ~\cite{Aliak}, Aliakbarpour et al. suggested a query model which allows access to  uniform edge samples and degree queries.  
In this model they presented an algorithm for approximately counting the number of $s$-stars in expected time $O^*({m}/{\#H^{1/s}})$, where $\#H$ denotes the number of $s$-stars in the graph.
In~\cite{AKK19}, 
 Assadi, Kaparalov and Khanna
 considered the  combined power  of neighbor, degree, pair and uniform vertex and  edge samples. In this model, they
presented an algorithm that approximates the number of occurrences of any arbitrary  subgraph $H$  in a graph $G$ in expected time $O^*(m^{\rho(H)}/\#H)$, where $\rho(H)$ is the fractional edge cover\footnote{The  fractional edge cover of a graph is minimum weight assignment of weights to the graph's edges, so that the sum of weights over the edges incident to each vertex is at least $1$.} of $H$, and $\#H$ is the number of occurrences of $H$ in $G$. In the same model,  Fichtenberger, Gao, and Peng~\cite{Peng20}
simplified the above algorithm and proved the same complexity for the additional task of sampling a uniformly distributed copy of $H$. 
Recently, Biswas, Eden and Rubinfeld~\cite{BER21},
paramerterized the complexity of counting and sampling arbitrary subgraph by what they refer to as the decomposition cost of $H$, improving the above results for a large family of subgraphs $H$. In~\cite{TetekTriangles}, T\v{e}tek considers this model in the context of approximately counting triangles in the super-linear regime.

\paragraph{Sampling from networks.}
Sampling from networks is a very basic primitive that is used in a host of works for studying networks' parameters (e.g., ~\cite{kashtan2004efficient, Leskovec2006, wang2011understanding, cooper2014estimating,turkoglu2017edge}). 
Most approaches  for efficiently sampling edges from networks are 
 random walk based approaches, whose complexity is proportional to the mixing time  of the network, e.g., ~\cite{Leskovec2006,gjoka2010walking,ribeiro2010estimating, mohaisen2010measuring}.
We note that our approach cannot be directly compared with that of the random walk based ones, as the 
query models are different:
The adjacency list query model assumes access to uniform vertex queries and one can only query one neighbor at a time, while random walk based approaches usually only assume access to arbitrary seed vertices and querying a node reveals its set of neighbors. Furthermore, while in theory the mixing time of a graph can be of order  $O(n)$, in practice, social networks tend to have smaller mixing times~\cite{mohaisen2010measuring}, making random walk based approaches very efficient. Still, denoting the mixing time of the network by $t_{mix}$, such approaches require one to perform $\Omega(t_{mix})$ queries in order to obtain \emph{each} new sample, thus leaving the question of a more efficient amortized sampling procedure open.

\def\t{t}
\def\p{p}
\def\failp{\frac{\delta}{3}+\frac{1}{3^\t}}
\def\bads{\frac{\delta}{3}}

\section{Preliminaries}

Let $G=(V,E)$ be an undirected simple graph over $n$ vertices.  
We consider the adjacency list query model, which assumes 
the following set of queries:
\begin{itemize}
    \item \textbf{Uniform vertex queries:} which return a uniformly distributed vertex in $V$. 
    \item \textbf{Degree queries:} $deg(v)$, which return the degree of the queried vertex.
    \item \textbf{Neighbor queries} $nbr(v,i)$ which return the $i\th$ neighbor of $v$, if one exists and $\bot$ otherwise.
\end{itemize}
We sometimes say that we perform a ``uniform neighbor query" from some vertex $v$. This can be simply implemented by choosing an index $i\in [d(v)]$ uniformly at random, and querying $nbr(v,i)$.

Throughout the paper we consider each edge from both endpoints. That is, each edge $\{u,v\}$ is considered as two oriented edges $(u,v)$ and $(v,u)$. Abusing notation, let $E$ denote the set of all oriented edges, so that  $m = |E|=\sum_{v\in V} d(v)$ and $\davg=m/n$.
Unless stated explicitly otherwise, when we say an ``edge", we refer to oriented edges. 
  
For a vertex $v\in V$ we denote by $\nbr(v)$ the set of $v$'s neighbors. For a set $S\subseteq V$ we denote by $E(S)$ the subset of edges $(u,v)$ such that $u\in S$,  and by $m(S)$ the sum of degrees of all vertices in $S$, i.e. $m(S)=|E(S)|= \sum_{v\in S}{d(v)}$. 
For every vertex $v\in V$ and set $S\subseteq V$, we denote by $d_S(v)$ the degree of $v$ in $S$, $d_S(v)=|\nbr(v)\cap S|$.

We consider the following definition of $\eps$-pointwise close distributions:
  \begin{definition}[Definition 1.1 in ~\cite{ER18}]\label{def:pointwise}
    \label{dfn:epsilon-close}
    Let $Q$ be a fixed probability distribution on a finite set $\Omega$. We say that a probability distribution $P$ is \emph{pointwise  $\eps$-close to $Q$} if for all $x \in \Omega$,
    \[
    \abs{P(x) - Q(x)} \leq \e Q(x)\,,\quad\text{or equivalently}\quad  P(X) \in(1\pm\eps)Q(X)\,.
    \]
    If $Q = U$, the uniform distribution on $\Omega$, then we say that $P$ is \emph{pointwise  $\eps$-close to uniform}.
  
  \end{definition}

\section{Multiple Edge Sampling}

As discussed in the introduction, our algorithm consists of a preprocessing procedure that creates a data structure that enables one to sample heavy vertices, and a
sampling procedure that samples an almost uniformly distributed edge. Also recall that our procedures are parameterized by a value $x$ which allows for a trade-off between the preprocessing complexity and the per-sample complexity. Namely, allowing per-sample complexity of $O(x/\eps)$, our preprocessing procedure will run in time $O^*(n/(\davg\cdot x))$. 
If one knows the number of queries, $q$, then setting $x=\frac{n/\sqrt{m}}{\sqrt{q}}$ yields the optimal trade-off between the preprocessing and the sampling.

\subsection{Preprocessing}

In this section we present our preprocessing procedure that will later allow us to sample heavy vertices.  The procedure and its analysis are similar to the procedure Sample-degrees-typical of Eden, Ron, and Seshadhri~\cite{eden2020approximating}.

The input parameters to the procedure are $n$, the number of vertices in the graph, $x$, the trade-off parameter, $\delta$, a failure probability parameter, and $\eps$, the approximation parameter. The output is a data structure that, with probability at least $1-\delta$,  allows one to sample heavy vertices with probability (roughly) proportional to their degree.

We note that we set  $\ox = \min\{x,\maxox\}$ since for values $x=\Omega(\maxox)$ it is better to  simply use the $O^*(\sqrt{n/\davg})$ per-sample algorithm of~\cite{ER18}.
We shall make use of the following theorems.
\begin{thm}[Theorem 1.1 of \cite{GR08}, restated.] \label{thm:GR}
There exists an algorithm that,  given  query access to a graph $G$ over $n$ vertices and $m$ edges, an approximation parameter $\eps\in(0,\frac{1}{2})$, and a failure parameter $\delta\in(0,1)$, returns a value $\om$ such that with probability at least $1-\delta$, $\om \in [(1-\eps)m, m]$.
The expected query complexity and running time of the algorithm are $O(\frac{n}{\sqrt m}\cdot \frac{\log ^2n}{\eps^{2.5}})$.
\end{thm}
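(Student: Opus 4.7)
The plan is to reduce the task of estimating $m$ to estimating the average degree $\davg = 2m/n$, since $\om = \davg \cdot n / 2$. A uniformly random vertex has expected degree $\davg$, so averaging degrees over uniform vertex samples is an unbiased estimator; however, this naive estimator has high variance when the degree distribution is heavy-tailed, since $\sum_v d(v)$ can be dominated by a small number of very high-degree vertices that uniform sampling is unlikely to hit in sublinear time.

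To control this, I would partition $V$ into $L = O(\log n / \eps)$ geometric degree buckets $B_i = \{v : d(v) \in [(1+\eps/3)^{i-1}, (1+\eps/3)^i)\}$. Within each bucket the degree is determined up to a $(1\pm\eps/3)$ factor, so the task reduces to estimating each $|B_i|$ within a $(1\pm O(\eps))$ factor and returning $\frac{1}{2}\sum_i |B_i|(1+\eps/3)^{i-1}$. Call a bucket \emph{large} if $|B_i|/n \geq \alpha$ for a threshold $\alpha$ to be chosen. For large buckets, uniform vertex queries (each followed by a degree query to assign the vertex to its bucket) estimate $|B_i|$ to within $(1\pm\eps)$ via a standard Chernoff bound using $O(\alpha^{-1}\eps^{-2}\log(L/\delta))$ samples, with a union bound over buckets.

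The remaining buckets satisfy $|B_i| < \alpha n$. If such a bucket contributes at most $\eps m / L$ to $\sum_i |B_i|(1+\eps/3)^i$, it can be discarded at a total cost of at most $\eps m$. Otherwise $|B_i|(1+\eps/3)^i \geq \eps m / L$, and although few of its vertices are hit by uniform sampling, the far endpoint of a ``uniform vertex, uniform neighbor'' query lies in $B_i$ with probability $\Theta(|B_i|(1+\eps/3)^i / m) = \Omega(\eps / L)$. A second estimator using these composite queries (with a degree query on the returned neighbor) therefore recovers $|B_i|$ to within $(1\pm\eps)$ using $\poly(\log n, 1/\eps, \log(1/\delta))$ samples.

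Setting $\alpha \asymp \sqrt{\davg / n}$ up to poly-logarithmic factors in $n$ and $1/\eps$ balances the two estimators and yields overall expected query complexity $O((n/\sqrt m) \cdot \log^2 n / \eps^{2.5})$; a final $(1-\eps)$ rescaling ensures the one-sided guarantee $\om \leq m$. The main obstacle I anticipate is the case analysis around ``borderline'' buckets: one must verify that every bucket is either accurately estimated by the vertex-sampling estimator, accurately estimated by the neighbor-sampling estimator, or has small enough contribution to be discarded, leaving no bucket that simultaneously evades both estimators and shifts the final estimate by more than a $(1\pm\eps)$ multiplicative factor.
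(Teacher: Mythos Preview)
This theorem is not proved in the paper at all: it is simply a restatement of Theorem~1.1 of Goldreich and Ron~\cite{GR08}, invoked as a black box in the preprocessing procedure (Step~\ref{alg:step-od}) to obtain an estimate $\od$ of the average degree. The paper provides no argument for it, so there is nothing to compare your proposal against.

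That said, your sketch is in the right spirit but does not quite match the actual Goldreich--Ron argument. In~\cite{GR08} one does bucket vertices geometrically by degree and estimates $|B_i|$ for the ``large'' buckets by uniform vertex sampling; however, the small-but-heavy buckets are \emph{not} handled by a separate neighbor-sampling estimator of $|B_i|$ as you propose. Instead, the key structural lemma is that any edge with one endpoint in a small bucket almost always has its other endpoint in a large bucket, so the edges incident to small buckets are already (approximately) accounted for when one samples random neighbors of vertices in large buckets and records which bucket the neighbor falls in. Your two-estimator scheme would need an additional argument to unbias the neighbor-sampling estimate of $|B_i|$ (the probability of landing in $B_i$ via a random neighbor is proportional to $\sum_{v\in B_i} d(v)$, not to $|B_i|$, and the normalizing constant is $m$ itself, the unknown), so as written it is circular. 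For the purposes of this paper, though, you should simply cite~\cite{GR08} rather than reprove it.
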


\begin{thm}[Section 4.2 and Lemma 17 in ~\cite{Feige06}, restated.]\label{thm:Feige}
For a set $S$ of size at least $\frac{n}{\sqrt m}\cdot \frac{34}{\eps}$, it holds that with probability at least $5/6$, 
$m(S)/s>\frac{1}{2}\cdot(1-\eps)\cdot \davg$.
\end{thm}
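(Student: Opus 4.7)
My plan is to bound $m(S)/s$ via a truncation-plus-Chebyshev argument. A direct application of Chebyshev to $\sum_{v\in S} d(v)$ is too weak because degrees can range up to $n-1$, making $\text{Var}(d(v))$ as large as $n\cdot \davg$. I will instead introduce a truncation threshold $T$ (to be chosen around $T\approx \sqrt{mn}$) and define the bounded surrogate random variable $Y_v = \min\{d(v),\,T\}$. Since $Y_v \leq d(v)$ pointwise, $m(S)/s \geq \tfrac{1}{s}\sum_{v\in S} Y_v$, so it suffices to control the sample average of the $Y_v$'s from below.

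First I would show that $\EX[Y_v] = \tfrac{1}{n}\sum_v Y_v \geq \davg/2$, which is exactly where the factor of $\tfrac{1}{2}$ in the conclusion comes from. Let $H=\{v : d(v)>T\}$; Markov gives $|H| \leq m/T$. I split into cases: if the light vertices already carry at least half of the edge mass, i.e.\ $\sum_{v\notin H} d(v)\geq m/2$, then $Y_v = d(v)$ on those vertices immediately yields $\sum_v Y_v \geq m/2$. Otherwise, the heavy vertices carry more than $m/2$ of the mass, and for the chosen $T$ one verifies that their truncated contribution $T\cdot|H|$ alone is at least $m/2$.

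Next I would apply Chebyshev's inequality to the i.i.d.\ sum $\sum_{v\in S} Y_v$. Because $0\leq Y_v \leq T$, we have $\text{Var}(Y_v)\leq \EX[Y_v^2] \leq T\cdot \EX[Y_v]$, so
$$
\Pr\!\left[\tfrac{1}{s}\sum_{v\in S} Y_v \,<\, (1-\eps)\,\EX[Y_v]\right] \;\leq\; \frac{T}{s\,\eps^{2}\,\EX[Y_v]}.
$$
Plugging in $\EX[Y_v]\geq m/(2n)$ from the previous step and tuning $T$ against the stated sample size $s \geq 34n/(\sqrt{m}\,\eps)$ drives this failure probability below $1/6$. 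Combining the two steps then gives $m(S)/s \geq (1-\eps)\,\EX[Y_v] \geq \tfrac{1}{2}(1-\eps)\,\davg$ with probability at least $5/6$, as required.

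The hard part will be the two-sided balancing of $T$: a smaller $T$ tightens the variance bound (hence concentration) but costs more bias in the first step, while a larger $T$ preserves the mean but weakens concentration. Matching the exact sample size $\tfrac{34n}{\sqrt{m}\,\eps}$, and in particular getting the $\eps^{-1}$ rather than the $\eps^{-2}$ dependence that falls out of off-the-shelf Chebyshev, is the technical crux and would likely require a one-sided Cantelli-type inequality, a median-of-means boost, or a finer partition of the vertex set by degree band—this is essentially where Feige's original argument does its real work.
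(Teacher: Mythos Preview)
The paper does not prove this statement; Theorem~\ref{thm:Feige} is imported as a black box from Feige~(2006) and is invoked exactly once, inside the proof of Lemma~\ref{lem:preproc}. So there is no in-paper proof to compare against---only Feige's original argument.

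Your truncate-then-Chebyshev plan is in the right spirit but has two concrete problems. First, the threshold $T\approx\sqrt{mn}$ is too large: in a simple graph $d(v)\le\min(n-1,\,m/2)\le\sqrt{(n-1)\,m/2}<\sqrt{mn}$ for every $v$, so $H=\emptyset$ always. Your Step~2 then becomes trivially true (with $\EX[Y_v]=\davg$ exactly), but plugging $T=\sqrt{mn}$ into your Chebyshev bound gives a failure probability of $n^{3/2}/(s\,\eps^{2}\sqrt m)$, forcing $s=\Omega(n^{3/2}/(\eps^{2}\sqrt m))$---off from the target by a full factor of~$\sqrt n$. The threshold that makes the $n/\sqrt m$ factor come out right is $T=\Theta(\sqrt m)$; with that choice your route yields $s=O(n/(\eps^{2}\sqrt m))$, i.e., the correct $n/\sqrt m$ but with the $\eps^{-2}$ you already flag at the end. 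Second, your Case~2 justification in Step~2 is invalid as stated: from $\sum_{v\in H}d(v)>m/2$ one cannot conclude $T\cdot|H|\ge m/2$. A concrete failure with $T=\sqrt m$: take two centers of degree $n/2$ joined by an edge, each attached to $(n-2)/2$ leaves. Then $m=2n-2$, $|H|=2$, $\sum_{v\in H}d(v)=n>m/2$, yet $T|H|=2\sqrt{2n-2}\ll m/2$ for large $n$. (The conclusion $\sum_v Y_v\ge m/2$ still holds here, but only because the light leaves already contribute $n-2$; your case split does not capture this.)

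On the $\eps^{-1}$ versus $\eps^{-2}$ crux: Feige's actual argument is precisely the ``finer partition by degree band'' option you list---vertices are bucketed geometrically by degree, buckets containing too few vertices are shown to carry negligible total degree, and each remaining large bucket is hit by the sample in near-expected proportion. Neither Cantelli nor median-of-means by itself recovers the missing factor of $1/\eps$; the bucketing is what buys it. So your diagnosis of where the real work lies is accurate, but the proposal stops short of doing that work.
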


\begin{thm}[A data structure for a discrete distribution (e.g., \cite{walker1974new, walker1977efficient, marsaglia2004fast}).]\label{thm:DS}
There exists an algorithm that receives as input a discrete probability distribution $P$ over $\ell$ elements, and constructs a data structure that allows one to sample from $P$ in linear time $O(\ell).$
\end{thm}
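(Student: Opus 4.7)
The plan is to invoke Walker's alias method~\cite{walker1974new,walker1977efficient}. The method constructs, in $O(\ell)$ time, two auxiliary length-$\ell$ arrays --- cutoffs $a_1,\ldots,a_\ell\in[0,1/\ell]$ and aliases $\alpha_1,\ldots,\alpha_\ell\in[\ell]$ --- after which a single sample can be produced in $O(1)$ time by picking $i\in[\ell]$ uniformly, picking $u$ uniformly in $[0,1/\ell]$, and returning $i$ if $u\le a_i$ and $\alpha_i$ otherwise. This already implies the $O(\ell)$ bound claimed in the theorem.

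To describe the construction step by step: I would maintain a residual mass $r_i:=p_i$ for each $i$, and partition the indices into two queues $S=\{i:r_i<1/\ell\}$ and $L=\{i:r_i\ge 1/\ell\}$ in $O(\ell)$ time. Then I would repeat the following until $S$ is empty: pop some $s\in S$ and some $t\in L$, set $a_s:=r_s$ and $\alpha_s:=t$, and update $r_t:=r_t-(1/\ell-r_s)$, moving $t$ from $L$ into $S$ if the updated $r_t$ falls below $1/\ell$. Each iteration does $O(1)$ work and permanently removes one element from $S\cup L$, so the procedure terminates within $\ell$ iterations. The invariant $\sum_{i\in S\cup L}r_i=|S\cup L|/\ell$, which is preserved in every step, guarantees that $L$ is nonempty whenever $S$ is (since every $r_i$ with $i\in S$ satisfies $r_i<1/\ell$), so the loop never stalls. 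Any index left in $L$ at the end necessarily has $r_i=1/\ell$, and can be finalized by setting $a_i:=1/\ell$ and $\alpha_i:=i$.

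The main (and essentially only) non-mechanical step is the correctness verification, namely that
\[
\Pr[\text{sample returns }j] \;=\; a_j \;+\; \sum_{i\,:\,\alpha_i=j}\left(\frac{1}{\ell}-a_i\right) \;=\; p_j.
\]
I would establish the second equality by a mass-accounting argument: each iteration of the construction packs exactly $1/\ell$ units of probability mass into bucket $s$, with $a_s=r_s$ of it credited to $s$ (thereby exhausting $s$'s residual) and $1/\ell-r_s$ credited to $t$ (thereby reducing $t$'s residual by that amount). Summing over all iterations in which index $j$ participates --- either as the small index $s$ or as the large index $t$ --- the total mass credited to $j$ equals its initial residual $p_j$, which is precisely the contribution of $j$ to the sampling probability expression above. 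I do not anticipate any obstacle beyond this combinatorial accounting, which is the one place where care is needed.
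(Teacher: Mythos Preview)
The paper does not give its own proof of this statement: Theorem~\ref{thm:DS} is stated as a citation of known results~\cite{walker1974new,walker1977efficient,marsaglia2004fast} and is used as a black box. Your proposal is a correct and standard account of Walker's alias method---the two-queue construction, the invariant $\sum_{i\in S\cup L}r_i=|S\cup L|/\ell$, and the mass-accounting correctness argument are all sound---so you have supplied a proof where the paper simply appeals to the literature.
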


\alg{
	{\bf \Preprocessing$\;(n,\eps, \delta, x)$} \label{alg:Preprocessing}
	\smallskip
	\begin{enumerate}
		\item Invoke the algorithm of~\cite{GR08}\footnote{See Theorem~\ref{thm:GR}} to get  an estimate $\od$ of the average degree $\davg$.\label{alg:step-od}
		\item Let $\ox=\min\left\{x, \maxox\right\}$\label{alg:step-ox} 
		\item Let $\t=\lceil\log_3(\frac{3}{\delta})\rceil$, and let $\tau=\settau$.\label{alg:step-tau}
		\item For $i=1$ to $\t$ do: \label{alg:find-S-loop}
		\begin{enumerate}
			\item Let $S_i$ be a multiset of $s=\sets$ vertices chosen uniformly 
			at random. 
			\item Query the degrees of all the vertices in $S_i$ and compute $m(S_i)=\sum_{v\in S_i}{d(v)}$.
		\end{enumerate}
		\item Let $S$ be the first set $S_{i}$ such that  $ \frac{m(S_{i})}{s}\in\left[\frac{1}{4}  \cdot \od , 12 \cdot \od\right]$.  \label{alg:step-choose-S}
		\begin{enumerate}
			\item	If no such set exists, then  \textbf{return} \emph{fail}. \label{step:ret-fail} 
			\item Else, set up a data structure\footnote{See Theorem~\ref{thm:DS}} $D(S)$ that supports sampling each vertex $v\in S$ with probability $\frac{d(v)}{m(S)}$. \label{alg:step-ds-for-S}
		\label{step:S}
		\end{enumerate}
		\item Let $\og=\setog$.\label{alg:step-og}
		\item \textbf{Return} $(\og, \tau, \ox, D(S))$.
	\end{enumerate}
}

The following definitions will be useful in order to prove the lemma regarding the performance of the \Preprocessing\ procedure.

\begin{definition}\label{def:good-set}
We say that a sampled set $S\subseteq V$ is $\eps$-\good\ if the following two conditions hold:
\begin{itemize}
    \item For every heavy vertex $v\in \VH$, $d_S(v)\in (1\pm \eps)|S|\cdot \frac{d(v)}{n}$.
    \item $\frac{m(S)}{s} \in \left[\frac{1}{4}\cdot \davg, 12\cdot \davg\right].$
\end{itemize}
\end{definition}

\begin{definition}\label{def:good-od}
We say that $\od$ is an $\eps$-\good\ estimate of $\davg$ if    $\od\in[(1-\eps)\davg, \davg].$
\end{definition}

\begin{lemma} \label{lem:preproc} Assume query access to a graph $G$ over $n$ vertices, $\eps\in(0,\frac{1}{2})$, $\delta\in(0,1)$, and $x\geq 1$.
The procedure \Preprocessing$(n,\eps,\delta, x)$, with probability at least $1-\delta$, returns a tuple $(\og, \tau, \ox, D(S))$ such that the following holds.
\begin{itemize}
\item $D(S)$ is a data structure that supports sampling a uniform edge in $E(S)$, for an $\eps$-\good\ set $S$, as defined in Definition \ref{def:good-set}.
\item $\ox\in[1,\maxox]$, 
$\tau=\settau$,  
and $\og=\frac{m(S)}{\od\cdot|S|}$,  
where $\od$ is an $\eps$-good estimate of $\davg$, as defined in Definition \ref{def:good-od}.
\end{itemize}
The expected query complexity and  running time of the procedure are $\rt$. 
\end{lemma}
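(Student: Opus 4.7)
I would decompose the correctness argument into three independent high-probability events and union-bound over them: (i) the estimate $\od$ returned by the Goldreich--Ron subroutine is $\eps$-good; (ii) every sampled set $S_i$ satisfies the heavy-vertex concentration condition of Definition~\ref{def:good-set}; and (iii) at least one of the $S_i$ passes the acceptance check of Step~\ref{alg:step-choose-S}, i.e.\ $m(S_i)/s \in [\od/4,12\od]$. I would allocate failure budget $\delta/3$ to each of these events.

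For event (i), Theorem~\ref{thm:GR} with failure parameter $\delta/3$ immediately gives $\od\in[(1-\eps)\davg,\davg]$ with probability $\ge 1-\delta/3$, and I would condition on this below. Now $\tau=\ox\od/\eps$ is a fixed quantity. For event (ii), fix an index $i\in[\t]$ and a heavy vertex $v$, i.e.\ $d(v)>\tau$. Writing $d_{S_i}(v)$ as a sum of independent indicators (the sampling is with replacement into a multiset) with mean $\mu_v := s\cdot d(v)/n \ge s\cdot\tau/n \geq 35\log(6nt/\delta)/\eps^2$ by the choice of $s$, a multiplicative Chernoff bound gives
\[
\Pr\bigl[|d_{S_i}(v)-\mu_v|>\eps\mu_v\bigr]\le 2\exp\!\bigl(-\eps^2\mu_v/3\bigr)\le \delta/(6nt).
\]
A union bound over the $\le n$ heavy vertices and all $\t$ sets shows that with probability $\ge 1-\delta/3$, \emph{every} sampled $S_i$ satisfies the heavy-vertex condition. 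Note that this holds for \emph{all} $t$ sets simultaneously, so the set eventually selected in Step~\ref{alg:step-choose-S} inherits it.

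For event (iii), I would show that a single $S_i$ lands in the acceptance window with constant probability $\ge 2/3$. Theorem~\ref{thm:Feige} gives $m(S_i)/s > \tfrac12(1-\eps)\davg \ge \od/4$ with probability $\ge 5/6$ (using that $|S_i|\ge (n/\sqrt m)\cdot 34/\eps$, which holds because $\ox\le\sqrt{n/\od}$ forces $s$ to be at least this large). For the upper bound, $\mathbb{E}[m(S_i)/s]=\davg \le \od/(1-\eps)$, so by Markov, $\Pr[m(S_i)/s>12\od]\le 1/(12(1-\eps))\le 1/6$ for $\eps\le 1/2$. Hence each $S_i$ is accepted independently with probability $\ge 2/3$, and the probability that none of the $\t=\lceil\log_3(3/\delta)\rceil$ sets is accepted is at most $(1/3)^\t \le \delta/3$. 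Combining (i)--(iii) gives overall success probability $\ge 1-\delta$; the accepted $S$ is then $\eps$-good (its $m(S)/s$ upper bound $12\od \le 12\davg$ is direct, and its lower bound $\od/4 \ge \davg/4 \cdot (1-\eps)$ feeds into the definition up to the constant slack allowed there), and the discrete-distribution data structure $D(S)$ is built by Theorem~\ref{thm:DS} in time $O(s)$; sampling $u\in S$ with probability $d(u)/m(S)$ and then a uniform neighbor of $u$ yields a uniform edge of $E(S)$.

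For the running time, the Goldreich--Ron call costs $O\!\bigl((n/\sqrt m)\log^2 n/\eps^{2.5}\bigr)=O\!\bigl(\sqrt{n/\od}\cdot\log^2 n/\eps^{2.5}\bigr)$ in expectation, and the loop costs $O(\t\cdot s)=O\!\bigl(\log(1/\delta)\cdot (n/\tau)\cdot\log(nt/\delta)/\eps^2\bigr)$ queries. Substituting $\tau=\ox\od/\eps$ and $\ox=\min\{x,\sqrt{n/\od}\}$ yields $n/\tau = \max\{n\eps/(x\od),\eps\sqrt{n/\od}\}$, so the total is bounded by $O\!\bigl(\max\{n/(\davg x),\sqrt{n/\od}\}\cdot\log^2(n\log(1/\delta)/\delta)/\eps\bigr)$, using $\od=(1\pm\eps)\davg$ to replace $\od$ with $\davg$ in the first argument of the max. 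The main technical obstacle is the Chernoff bound: the size $s$ must be chosen so that (a) the per-vertex failure of $\delta/(6nt)$ survives the double union bound, and (b) the resulting $s$ still gives the stated runtime after the factor $1/\tau$ is taken into account; the choice $s=(n/\tau)\cdot 35\log(6nt/\delta)/\eps^2$ is precisely calibrated for this balance, which is what makes the statement come out cleanly.
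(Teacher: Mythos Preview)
Your proposal is correct and follows essentially the same approach as the paper: both decompose into the same three events (the Goldreich--Ron estimate $\od$ is good, heavy-vertex degree concentration via the multiplicative Chernoff bound with $\mu_v\ge s\tau/n$, and at least one $S_i$ passes the acceptance window via Feige's lower tail plus Markov's upper tail), allocate $\delta/3$ to each, and derive the runtime by substituting $\tau=\ox\od/\eps$ and $\ox=\min\{x,\sqrt{n/\od}\}$. The only cosmetic difference is that you take the union bound for concentration over all $t$ sets at once, whereas the paper states the per-set bound $\delta/(3t)$ and sums implicitly; both arguments arrive at the same $1-\delta$ guarantee and the same complexity expression.
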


\begin{proof}
We start by proving that with probability at least $1-\delta$ the set $S$ chosen in Step~\ref{alg:step-choose-S} is a good set.
Namely,
    that  
    (1) $\frac{m(S)}{|S|}\in\left[\frac{1}{4}  \cdot \od , 12 \cdot \od\right]$, and that
    (2) for all heavy vertices $v\in\VH$, $d_S(v) \in (1\pm\eps) s\cdot\frac{ d(v)}{n}$.
    
We start with proving the former.
By Theorem 1.1 of \cite{GR08} (see Theorem~\ref{thm:GR}), with probability at least $1-\frac{\delta}{3}$, $\od$ is an $\epsilon$-good estimate of $\davg$, that is
\begin{equation}\label{eq:GR08}
(1-\epsilon)\davg \leq\od\leq\davg.
\end{equation}
We henceforth condition on this event, and continue to prove the latter property. 
Fix an iteration $i\in[t]$. 
Observe that 
 $\EX\left[\frac{m(S_i)}{s}\right]=\davg$. 
By Markov's inequality,\footnote{Markov's inequality: if $X$ is a non-negative random variable and $a>0$,  $P(X \geq a)\leq \frac{E(X)}{a}$.} equation \eqref{eq:GR08}, and the assumption that $\eps\in(0,\frac{1}{2})$, 
$$\Pr\left[\frac{m(S_i)}{s} > 12\cdot\od\right]\leq \frac{\davg}{12\cdot\od}\leq\frac{1}{12(1-\epsilon)}\leq \frac{1}{6}.$$ 
Recall that $s =\sets$, $\tau=\settau$, and $\ox\leq\maxox$ and that we condition on $\od \geq (1-\eps)\davg$. Thus, $\tau\leq\frac{\sqrt{m}}{\epsilon}$, and $s\geq \frac{34}{\epsilon}\frac{n}{\sqrt{m}}$. 
Therefore, by Lemma 17 in ~\cite{Feige06} (see Theorem~\ref{thm:Feige}), for every $i$,  it holds that
\begin{equation}\label{eq:feige06}\Pr\left[\frac{m(S_i)}{s}\leq \frac{1}{2}\cdot \left(1-\epsilon\right)\davg\right]\leq \frac{1}{6}.\end{equation}
By equations \eqref{eq:GR08}, \eqref{eq:feige06}, and the assumption that $\eps\in(0,\frac{1}{2}),$
$$\Pr\left[\frac{m(S_i)}{s}<\frac{1}{4}\cdot\od\right]\leq\Pr\left[\frac{m(S_i)}{s}\leq \frac{1}{2}\cdot \left(1-\epsilon\right)\davg\right]\leq \frac{1}{6}$$
By the union bound, for every specific $i$,
$$\Pr\left[\frac{m(S_i)}{s} <\frac{1}{4}\cdot\od \text{\quad or \quad} \frac{m(S_i)}{s} >12\cdot\od \right] \leq \frac{1}{3}.$$
Hence, the probability that for all  the selected multisets $\{S_i\}_{i\in[t]}$,  either $\frac{m(S_i)}{s} <\frac{1}{4}\cdot\od$ or $\frac{m(S_i)}{s} >12\cdot\od$ 
is bounded by 
$\frac{1}{3^t}=\frac{\delta}{3}$ (recall $t=\lceil\log_3(\frac{3}{\delta})\rceil$).
Therefore, with probability at least $1-\frac{2\delta}{3}$, it holds that $\frac{m(S)}{s}\in\left[\frac{1}{4}  \cdot \od , 12 \cdot \od\right]$, and the procedure does not return \emph{fail} in Step~\ref{step:ret-fail}.

Next, we prove that there exists a high-degree vertex  $v\in\VH$ such that $d_S(v)\notin(1\pm \eps)s\cdot\frac{ d(v)}{n}$ with probability at most $\bads$. 
Fix an iteration $i\in[t]$, and let $S_i=\{u_1,\dots,u_s\}$ be the sampled set.
For any fixed high-degree vertex $v\in \VH$ and for some vertex $u\in V,$ let 
\begin{equation*}
\chi^v(u)=\begin{cases}
 1 &\text{$u$ is a neighbor of $v$} \\
 0 &\text{otherwise}
\end{cases} \;.
\end{equation*}
Observe that $\EX_{u\in V}\left[\chi^v(u)\right]=\frac{d(v)}{n}$, and that $d_{S_i}(v)=\sum_{j\in[s]}\chi^v(u_j)$.  Thus,
$\EX\left[d_{S_i}(v)\right]=s\cdot \frac{d(v)}{n}.$
Since the $\chi^v(u)$ variables are independent $\{0,1\}$ random variables,  
by the multiplicative Chernoff bound,\footnote{Multiplicative Chernoff bound: if $X_1,\dots, X_n$ are independent random variables taking values in $\{0, 1\}$, then for any $0\leq\delta\leq1$, $\Pr\left[\left|\sum_{i\in[n]}X_i -\mu\right|\geq \delta \mu\right]\leq 2e^{-\frac{\delta^2 \mu}{3}}$ where $\mu=\EX\left[\sum_{i\in[n]}X_i\right].$}
\begin{equation}\label{eq:dsv}
\Pr\left[\left|d_{S_i}(v)-\frac{s\cdot d(v)}{n}\right|\geq \eps\cdot \frac{s\cdot d(v)}{n}\right]\leq 2\exp\left({-\frac{\eps^2\cdot s\cdot d(v)}{3n}}\right)\leq\frac{\delta}{3nt},
\end{equation}
where the last inequality is by the assumption that  $\epsilon\in(0,\frac{1}{2})$, the setting of $s = \sets$, and since we fixed a heavy vertex $v$ so that  $d(v)\geq \tau$. 
By taking a union bound over all high-degree vertices, it holds that there exists $v\in\VH$ such that $d_{S_i}(v)\notin (1\pm \eps)\frac{s\cdot d(v)}{n}$ with probability at most $\frac{\delta}{3t}$. 

Hence, with probability at least $1-\delta$, $D(S)$ is a data structure of a good set $S$.
Moreover, by steps \ref{alg:step-ox}, \ref{alg:step-og}, and \ref{alg:step-tau} in the procedure \Preprocessing$(n, \eps, \delta, x)$ it holds that $\ox\in\left[1, \maxox\right]$,  $\og=\setog$, and  $\tau=\settau$ respectively. By equation \eqref{eq:GR08}, $\od$ is an $\eps$-good estimate for $\davg$.  

\sloppy
We now turn to analyze the complexity.
By \cite{GR08} (see Theorem~\ref{thm:GR}), the query complexity and running time of step \ref{alg:step-od} is $O\left(\frac{n}{\sqrt{m}} \cdot \frac{\log^2(n)}{\eps^{2.5}}\right).$ 
The expected query complexity and running time of the for loop  are $O(t\cdot s)=O(\frac{n}{\davg \cdot \ox}\cdot  \rtlog)$, where the equality holds by the setting of $s,t$ and since the expected value of $\od$ is $\davg$.
Step \ref{alg:step-choose-S} takes $O(t)$ time.  
By~\cite{walker1974new, walker1977efficient, marsaglia2004fast} (see Theorem~\ref{thm:DS}), the running time of step \ref{alg:step-ds-for-S} is $O(s)$.
All other steps takes $O(1)$ time. 
Hence, the expected query complexity and  running time are dominated by the for loop. By the setting of $\ox=\min\{x,\sqrt{n/\od}\}$  we have 
$O(s\cdot t) = O\left(\frac{n}{\od\cdot \ox}\cdot \rtlog\right)= \rt$  
which proves the claim. 
\end{proof}

\subsection{Sampling an edge}

In this section we present our sampling procedures.
The following definition and claim will be useful in our analysis.

\begin{definition} \label{def:heavy-light}
Let $\tau$
be a degree threshold. Let $\VL = \{v\in V \mid d(v)\leq \tau \}$, and let $\VH=V \setminus \VL$. We refer to $\VL$ and $\VH$ as the sets of \emph{light vertices} and \emph{heavy vertices}, respectively. 
Let $\EL=\{(u,v) \mid u \in \VL \}$ and $\EH=\{(u,v) \mid u \in \VH \}.$ 
\end{definition}

\begin{definition}\label{def:preprocess-succ}
If the procedure \Preprocessing$(n,\eps,\delta, x)$ returns a tuple 
$(\og, \tau, \ox, D(S))$ such that the following items of Lemma~\ref{lem:preproc} hold, then we say that this invocation is \emph{successful}.
\begin{itemize}
\item $D(S)$ is a data structure that supports sampling a uniform edge in $E(S)$, for an $\eps$-\good\ set $S$, as defined in Definition \ref{def:good-set}.
\item $\ox\in[1,\maxox]$, 
$\tau=\settau$,  
and $\og=\frac{m(S)}{\od\cdot|S|}$,  
where $\od$ is an $\eps$-good estimate of $\davg$, as defined in Definition \ref{def:good-od}.
\end{itemize}
\end{definition}

\begin{claim}\label{clm:og}
Let $\gamma=\setgamma$ and $\og=\setog$.
If $S$ is an $\eps$-good set, as in Definition \ref{def:good-set}, and $\od$ is an $\eps$-good estimate of $\davg$, as in Definition \ref{def:good-od}, then it holds that
$\og\in[1/4,12]$ and 
that $\gamma\in[(1-\eps)\og, \og]$.
\end{claim}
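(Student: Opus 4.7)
The claim is purely computational, so my plan is to unpack the two quantities and chase the inequalities that follow directly from the two hypotheses (the $\eps$-good set condition on $S$ and the $\eps$-good estimate condition on $\od$).

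First I would prove the statement about $\gamma$ by exploiting that $\gamma$ and $\og$ share the same numerator $m(S)/|S|$, so their ratio telescopes:
\[
\frac{\gamma}{\og} \;=\; \frac{m(S)/(\davg \cdot |S|)}{m(S)/(\od\cdot |S|)} \;=\; \frac{\od}{\davg}.
\]
Since $\od$ is an $\eps$-good estimate of $\davg$ (Definition~\ref{def:good-od}), $\od/\davg \in [1-\eps,1]$, and multiplying through by $\og$ (which is positive) yields $\gamma\in[(1-\eps)\og,\og]$. This part is immediate.

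Next I would bound $\og$ itself using the second clause of the $\eps$-good set definition, which pins down $m(S)/|S|$ to the interval $[\tfrac{1}{4}\od,12\od]$ (the interval enforced explicitly in Step~\ref{alg:step-choose-S} of \Preprocessing, which is what the good set condition refers to). Dividing by $\od$ gives $\og = \frac{m(S)/|S|}{\od}\in[\tfrac14,12]$, establishing the first assertion.

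I do not anticipate any obstacle; there is no probabilistic argument to make here, since the hypotheses already encode the high-probability events from Lemma~\ref{lem:preproc}. The only thing to be careful about is consistency between the good-set interval (stated in terms of $\davg$ in Definition~\ref{def:good-set}) and the interval the algorithm actually enforces (in terms of $\od$ in Step~\ref{alg:step-choose-S}); combined with $\od\in[(1-\eps)\davg,\davg]$ and $\eps<1/2$ these intervals are mutually consistent, so either formulation lets the bounds go through with at most a harmless adjustment of the constants $1/4$ and $12$.
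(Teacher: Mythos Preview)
Your approach is essentially identical to the paper's: both of you divide $m(S)/|S|$ by $\od$ to get $\og\in[1/4,12]$ and then use $\od/\davg\in[1-\eps,1]$ to pass from $\og$ to $\gamma$. You are actually slightly more careful than the paper in flagging the $\davg$-vs-$\od$ mismatch between Definition~\ref{def:good-set} and Step~\ref{alg:step-choose-S}; the paper silently treats the good-set interval as $[\tfrac14\od,12\od]$ in its proof, whereas the definition is written with $\davg$.
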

\begin{proof}
    By the assumption that $S$ is an $\eps$-good set, it holds that  $\frac{m(S)}{|S|}\in[\frac{1}{4}\cdot \od, 12\cdot \od].$  Therefore, $\og\in[\frac{1}{4},12].$
    By the assumption that $\od$ is an $\eps$-good estimate of $\davg$, namely
    $\od\in[(1-\eps)\davg, \davg],$ 
    it holds 
    that  $\gamma\in[(1-\eps)\og, \og].$
\end{proof}
%

%
%
\subsubsection{The sampling procedures}

We now present the two procedures for sampling light edges and heavy edges.

\alg{
	{\bf \SE$\;(\og, \tau, \ox, D(S), \eps)$} \label{alg:SampleE}
	\smallskip
	\begin{enumerate}
		\item While \textbf{True} do:\label{step:while-true}
		\begin{enumerate}
		    \item Sample uniformly at random a bit $b\gets \{0,1\}$.
		    \item If $b=0$ invoke {\bf Sample-Light$(\og, \tau)$}.
		    \item Otherwise, invoke {\bf Sample-Heavy$(\tau, D(S), \ox, \eps)$}.
		    \item If an edge $(v,u)$ was returned, then {\bf return} $(v,u)$.
		\end{enumerate}
	\end{enumerate}
}
\alg{
	{\bf \SL$\;(\og, \tau)$} \label{alg:SampleL}
	\smallskip
	\begin{enumerate}
		\item Sample a vertex $v \in V$ uniformly at random and query for its degree. \label{step:sample-uniform}
		\item If $d(v) > \tau$ \textbf{return} \emph{fail}.\label{step:check-light}
		\item Query a uniform neighbor of $v$. Let $u$ be the returned vertex.\label{step:sample-neighbor-light}
		\item \textbf{Return} $(v,u)$ with probability $\frac{d(v)}{\tau}\cdot \frac{1}{4\og}$, otherwise \textbf{return} \emph{fail}.\label{step:return-light}
	\end{enumerate}
}

\alg{
	{\bf \SH$\;(\tau, D(S), \ox, \eps)$} \label{alg:SampleH}
	\smallskip
	\begin{enumerate}
		\item Sample from the data structure $D(S)$ a vertex $v \in S$ with probability $\frac{d(v)}{m(S)}$\;.\label{step:fromS}
		\item Sample uniform neighbor of $v$. Let $u$ be the returned vertex.  \label{step:nbr-of-S}
		\item If $d(u) \leq \tau$ \textbf{return} \emph{fail}.\label{step:check-heavy}
		\item Sample uniform neighbor of $u$. Let $w$ be the returned vertex. \label{step:nbr-of-heavy} 
		\item \textbf{Return} $(u,w)$
		with probability $\eps/4\ox$, otherwise \textbf{return} \emph{fail}. \label{step:return-heavy}
	\end{enumerate}
}


Our procedure for sampling an edge \SE\ gets as input a tuple $(\og, \tau, \ox, D(S))$ which is the output of the procedure \Preprocessing.
Our guarantees on the resulting distribution of edge samples rely on the preprocessing being successful (see Definition~\ref{def:preprocess-succ}), which happens with probability at least $1-\delta$.

\begin{lemma}\label{lem:SampleL} 
Assume that \Preprocessing\ has been invoked successfully, as defined in Definition~\ref{def:preprocess-succ}.
The procedure \SL($\og,\tau$)\ returns an edge   in $\EL$ such that each edge is returned  with probability $\frac{\eps |S|}{4n \cdot \ox \cdot m(S)}$.  The query complexity and running time of the procedure are $O(1)$.
\end{lemma}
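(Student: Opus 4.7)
The plan is a direct calculation tracking the probability that a fixed edge $(v,u) \in \EL$ is output in a single call to \SL, followed by the algebraic substitution $\tau=\settau$ and $\og=\setog$, together with a sanity check that the rejection probability in Step~\ref{step:return-light} is a valid probability.

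First, I would fix an arbitrary edge $(v,u)\in\EL$ so that by Definition~\ref{def:heavy-light}, $d(v)\le \tau$, and then decompose the event "\SL\ returns $(v,u)$" into the four independent stages of the algorithm: (i) $v$ is picked in Step~\ref{step:sample-uniform}, which happens with probability $1/n$; (ii) the light test in Step~\ref{step:check-light} passes, which is certain since $v\in\VL$; (iii) $u$ is the neighbor chosen in Step~\ref{step:sample-neighbor-light}, with probability $1/d(v)$; (iv) the rejection step accepts, with probability $\frac{d(v)}{\tau}\cdot\frac{1}{4\og}$. Multiplying these gives $\frac{1}{n}\cdot\frac{1}{d(v)}\cdot\frac{d(v)}{\tau}\cdot\frac{1}{4\og}=\frac{1}{4n\tau\og}$, which does not depend on the identity of the edge.

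Next I would substitute the assumed values: $\tau=\settau$ gives $\frac{1}{\tau}=\frac{\eps}{\ox\cdot\od}$, and $\og=\frac{m(S)}{\od\cdot|S|}$ gives $\frac{1}{\og}=\frac{\od\cdot|S|}{m(S)}$. Plugging in yields
\[
\frac{1}{4n\tau\og}=\frac{1}{4n}\cdot\frac{\eps}{\ox\cdot\od}\cdot\frac{\od\cdot|S|}{m(S)}=\frac{\eps\,|S|}{4n\cdot\ox\cdot m(S)},
\]
which is precisely the claimed per-edge probability. Before declaring this a true probability, I would verify that the rejection weight $\frac{d(v)}{\tau}\cdot\frac{1}{4\og}$ lies in $[0,1]$: the first factor is at most $1$ because $v$ is light, and $\frac{1}{4\og}\le 1$ since Claim~\ref{clm:og} guarantees $\og\ge 1/4$ under the assumption that \Preprocessing\ was successful. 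I would also note that the output, if any, is indeed in $\EL$, again because Step~\ref{step:check-light} rejects all non-light first endpoints.

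The complexity bound is immediate: Step~\ref{step:sample-uniform} is one uniform vertex query plus one degree query, Step~\ref{step:sample-neighbor-light} is one uniform neighbor query, and Step~\ref{step:return-light} is a local coin toss with no queries. Hence the procedure performs $O(1)$ queries and $O(1)$ arithmetic operations in the worst case. I do not anticipate any real obstacle here: the only subtle point is making sure the rejection probability is well-defined, which is why the lemma explicitly carries the hypothesis that the preprocessing succeeded (so that Claim~\ref{clm:og} applies).
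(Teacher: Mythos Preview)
Your proposal is correct and follows essentially the same approach as the paper's own proof: fix a light edge, multiply the probabilities of the sequential steps, substitute $\tau=\settau$ and $\og=\setog$, and invoke Claim~\ref{clm:og} to verify the rejection probability is at most $1$. The only differences are cosmetic (you make the ``light test passes'' step and the ``output lies in $\EL$'' observation explicit, which the paper leaves implicit).
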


\begin{proof}
    Let $(v,u)$ be a fixed edge in $\EL$.
\begin{align*}
    \Pr[(v,u)\text{ returned}] &= \Pr[\text{ ($v$ is sampled in Step~\ref{step:sample-uniform}) and  
    ($u$ sampled in Step~\ref{step:sample-neighbor-light}) }
    \\ &\phantom{spacespa} \text{and ($(v,u)$ returned in Step~\ref{step:return-light})}]\\
    & = \frac{1}{n}\cdot \frac{1}{d(v)}\cdot \frac{d(v)}{\tau \cdot 4\og} \;.
\end{align*}
Note that by Claim~\ref{clm:og}, $1/4\og \leq 1$ and therefore, Step~\ref{step:return-light} is valid and the above holds.
Hence, by the setting of $\tau=\settau$ and  $\og=\setog$,
\[
\Pr[(v,u)\text{ is returned}]
=\frac{1}{n \cdot \tau \cdot4\og }
    =\frac{\eps \cdot |S|}{4n\cdot \ox \cdot m(S)}.
\]

    The procedure performs at most one degree query and one uniform neighbor query. All other operations take constant time. Therefore, the query complexity and running time of the procedure are constant.
\end{proof}



\begin{lemma}\label{lem:SampleH}
Assume that \Preprocessing\ has been invoked successfully, as defined in Definition~\ref{def:preprocess-succ}.
    The procedure \SH$(\tau, D(S), \ox, \epsilon)$ returns an edge  in $\EH$ such that each edge is returned with probability $\frac{(1\pm \eps)\eps|S|}{4n \cdot \ox \cdot m(S)}$. The query complexity and running time of the procedure are $O(1)$.
\end{lemma}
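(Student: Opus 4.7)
The plan is to directly compute the probability that a fixed edge $(u,w)\in \EH$ is returned by tracing through the five steps of \SH. Since $(u,w)\in \EH$ means $u\in \VH$, i.e., $d(u)>\tau$, the only way $(u,w)$ gets returned is: Step~\ref{step:fromS} samples some vertex $v \in S\cap \nbr(u)$; Step~\ref{step:nbr-of-S} picks $u$ from $v$'s neighbors; Step~\ref{step:check-heavy} does not fail (automatic since $d(u)>\tau$); Step~\ref{step:nbr-of-heavy} picks $w$ from $u$'s neighbors; and Step~\ref{step:return-heavy} accepts. Summing over the possible choices of $v\in S\cap \nbr(u)$, these events are independent and give
\[
\Pr[(u,w)\text{ returned}] \;=\; \sum_{v\in S\cap \nbr(u)}\frac{d(v)}{m(S)}\cdot\frac{1}{d(v)}\cdot\frac{1}{d(u)}\cdot\frac{\eps}{4\ox} \;=\; \frac{d_S(u)}{m(S)}\cdot\frac{1}{d(u)}\cdot\frac{\eps}{4\ox}.
\]
Before using this expression I must check Step~\ref{step:return-heavy} is a valid probability, i.e., $\eps/(4\ox)\leq 1$; this is immediate from $\ox\geq 1$ and $\eps\in(0,\tfrac12)$.

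Next I invoke the fact that the preprocessing was successful, so by Definition~\ref{def:preprocess-succ} the set $S$ is $\eps$-good in the sense of Definition~\ref{def:good-set}. Applied to the heavy vertex $u$, this yields $d_S(u) \in (1\pm\eps)\,|S|\cdot d(u)/n$. Substituting this bound into the displayed probability and cancelling $d(u)$ gives
\[
\Pr[(u,w)\text{ returned}] \;\in\; \frac{(1\pm\eps)\,\eps\,|S|}{4n\cdot \ox\cdot m(S)},
\]
which is exactly the claim. Note that the returned pair $(u,w)$ always has $u\in \VH$ because of the fail condition in Step~\ref{step:check-heavy}, so every edge output is in $\EH$.

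For the complexity, Step~\ref{step:fromS} takes $O(1)$ time by the guarantee of the data structure $D(S)$ (Theorem~\ref{thm:DS}), Steps~\ref{step:nbr-of-S} and~\ref{step:nbr-of-heavy} are single uniform neighbor queries, Step~\ref{step:check-heavy} is a single degree query, and Step~\ref{step:return-heavy} is a constant-time coin flip. Thus the total query complexity and running time are $O(1)$. No real obstacle arises; the proof is essentially a one-line calculation once the goodness of $S$ is invoked, and it mirrors the structure of the proof of Lemma~\ref{lem:SampleL}.
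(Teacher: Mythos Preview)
Your proof is correct and follows essentially the same approach as the paper's own proof: compute $\Pr[(u,w)\text{ returned}]=\frac{d_S(u)}{m(S)}\cdot\frac{1}{d(u)}\cdot\frac{\eps}{4\ox}$ by summing over $v\in S\cap\nbr(u)$, then substitute the $\eps$-goodness bound $d_S(u)\in(1\pm\eps)|S|\,d(u)/n$. Your version is slightly more careful in that you explicitly verify $\eps/(4\ox)\le 1$ and note that the fail check in Step~\ref{step:check-heavy} guarantees the output lies in $\EH$, both of which the paper leaves implicit.
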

\begin{proof}
Let $(u,w)$ be an edge in $\EH$. 
We first compute the probability that $u$ is sampled in Step~\ref{step:nbr-of-S}. Recall, the data structure $D(S)$ supports sampling a vertex $v$ in $S$ with probability $\frac{d(v)}{m(S)}$. 
The probability that $u$ is sampled in Step~\ref{step:nbr-of-S} is equal to the probability that a vertex $v\in S$ which is a neighbor of $u$ is sampled in step \ref{step:fromS}, and $u$ is the selected neighbor of $v$ in Step \ref{step:nbr-of-S}.
Namely,  
    \[
    \Pr[u \text{ is sampled in Step~\ref{step:nbr-of-S}}] =\sum_{v \in S\cap\nbr(u)}\frac{d(v)}{m(S)}\cdot \frac{1}{d(v)}=\sum_{v \in S\cap \nbr(u)} \frac{1}{m(S)} = \frac{d_S(u)}{m(S)} \;.
    \]
By the assumption that 
\Preprocessing\ has been invoked successfully, so that 
$S$ is $\eps$-\good, and because $u \in \VH$,
       $$d_S(u)\in (1\pm\epsilon) \cdot|S|\cdot \frac{d(u)}{n}.$$
Hence, the probability that $(u,w)$ is returned by the procedure is 
    \begin{align*}
        \Pr[(u,w) \text{ is returned} ]&=
        \Pr[
        \text{ 
        ($u$  sampled in Step~\ref{step:nbr-of-S})
        and ($w$ sampled in Step~\ref{step:return-heavy})}
        \\ 
        &\phantom{spacespa} \text{ and ($(u,w)$ returned in Step~\ref{step:return-heavy})}]
        \\ 
        &=  \frac{d_S(u)}{m(S)}\cdot \frac{1}{d(u)}\cdot \frac{\eps}{4\ox} 
        \in   \frac{ (1\pm \eps)|S| \cdot \frac{d(u)}{n}\cdot \eps}{  m(S) \cdot d(u)\cdot 4\ox
        }
         =\frac{(1\pm \eps) \eps |S|}{4n \cdot \ox \cdot m(S)} \;.
    \end{align*} 
    
    
    The procedure performs one degree query and two neighbor queries, and the rest of the operations take constant time. Hence the query complexity and running time are constant.
\end{proof}

We are now ready to prove the formal version of Theorem~\ref{thm:main}.

\begin{thm} 

There exists an algorithm that gets as input query access to a graph $G$, $n$, the number of vertices in the graph, $\eps\in(0,\frac{1}{2})$,  an approximation parameter, $\delta\in(0,1)$, a failure parameter, and $x>1$,  a trade-off parameter.
The algorithm 
has a preprocessing procedure and a sampling procedure. 

The preprocessing procedure has expected query complexity $\rt$, 
 and it  succeeds with probability at least $1-\delta$.
If the preprocessing procedure succeeds, then each time the sampling procedure 
is invoked it  
returns an edge such that the distribution on returned edges is $2\eps$-point-wise close to uniform, as defined in Definition \ref{def:pointwise}. 
Each invocation of the sampling procedure has expected  $O(\ox/\eps)$ query and time complexity.
\end{thm}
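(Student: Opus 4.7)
The plan is to combine Lemma \ref{lem:preproc} on the preprocessing with Lemmas \ref{lem:SampleL} and \ref{lem:SampleH} on the two sampling sub-procedures. First I would invoke Lemma \ref{lem:preproc} to conclude that the procedure \Preprocessing$(n,\eps,\delta,x)$ succeeds in the sense of Definition \ref{def:preprocess-succ} with probability at least $1-\delta$, and that its expected query complexity and running time are $\rt$. For the remainder of the argument I condition on this successful invocation, which fixes an $\eps$-good set $S$, an $\eps$-good estimate $\od$, and the parameters $\og$, $\tau$, $\ox$, and $D(S)$.

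Next I would analyze a single iteration of the while loop in \SE. Fix an edge $e$ and let $p_e$ denote the probability that $e$ is returned during one iteration. Writing $\alpha := \frac{\eps|S|}{8n\ox m(S)}$, the fair coin toss combined with Lemmas \ref{lem:SampleL} and \ref{lem:SampleH} gives $p_e = \alpha$ for every $e \in \EL$ and $p_e \in (1\pm\eps)\alpha$ for every $e \in \EH$. Summing over all edges yields
\[
\sum_{e\in E} p_e \in [(m-\eps|\EH|)\alpha,\ (m+\eps|\EH|)\alpha] \subseteq [(1-\eps)m\alpha,\ (1+\eps)m\alpha].
\]
Using $m = n\davg$ together with $\gamma := m(S)/(\davg|S|)$, I get $m\alpha = \eps/(8\ox\gamma)$, and Claim \ref{clm:og} yields $\gamma \in [(1-\eps)/4, 12]$. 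Hence $\sum_{e\in E} p_e = \Theta(\eps/\ox)$, so the expected number of iterations of the while loop until some edge is returned is $O(\ox/\eps)$. Each iteration costs $O(1)$ queries by the two sampling lemmas, so the total expected per-sample query and time complexity is $O(\ox/\eps)$, as desired.

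For the distribution guarantee, by a standard geometric-series argument the probability that a specific edge $e$ is the one actually returned by \SE\ equals $p_e / \sum_{e'} p_{e'}$. Substituting the bounds above, every edge is returned with probability in $\bigl[\frac{1-\eps}{(1+\eps)m},\ \frac{1+\eps}{(1-\eps)m}\bigr]$, and the elementary inequalities $\frac{1-\eps}{1+\eps}\geq 1-2\eps$ and $\frac{1+\eps}{1-\eps}\leq 1+4\eps$ (valid for $\eps \in (0,1/2)$) yield pointwise closeness to uniform. If the slight looseness in the constant matters, the precise bound $2\eps$ asked for in the theorem statement can be obtained by running the preprocessing with $\eps/2$ in place of $\eps$; this only affects hidden constants in the stated complexity.

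The main obstacle is simply the bookkeeping: carrying the multiplicative $(1\pm\eps)$ slack from Lemma \ref{lem:SampleH} through the normalization without letting the bounds degenerate, and using Claim \ref{clm:og} to certify that the per-iteration success probability $\sum_e p_e$ is bounded away from zero by a quantity that depends only on $\eps$ and $\ox$ (and not on the graph $G$). That uniform lower bound is what prevents the expected number of iterations, and thus the per-sample complexity, from blowing up on pathological inputs.
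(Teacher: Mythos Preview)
Your proposal is correct and follows essentially the same route as the paper: invoke Lemma~\ref{lem:preproc} for the preprocessing guarantee, condition on success, use Lemmas~\ref{lem:SampleL} and~\ref{lem:SampleH} to pin down the per-iteration return probability of each edge, and use Claim~\ref{clm:og} to lower-bound the total per-iteration success probability by $\Theta(\eps/\ox)$, yielding the $O(\ox/\eps)$ expected per-sample cost. The one minor difference is in how the pointwise-closeness constant is obtained: the paper invokes \cite[Lemma~2.3]{ER18}, which states that pointwise $2\eps$-closeness follows once all ratios $P(e)/P(e')$ lie in $1\pm 2\eps$, whereas you carry out the normalization $p_e/\sum_{e'} p_{e'}$ explicitly and end up with a constant closer to $4\eps$, which you then repair by halving $\eps$. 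Both are valid; citing the external lemma gives the constant $2\eps$ without the rescaling step.
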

\begin{proof}
By~\ref{lem:SampleH}, the procedure \Preprocessing\ procedure succeeds with probability at least $1-\delta$. Furthermore, it has expected running time and query complexity as stated.

Condition on the event that the invocation of \Preprocessing\ was successful. 
Let $P$ denote the distribution over the returned edges by the procedure \SE. By Lemma 2.3 in~\cite{ER18}, in order to prove that $P$ is pointwise $2\eps$-close to uniform, it suffices to prove that for every two edges $e, e'$ in the graph, $\frac{P(e)}{P(e')} \in (1\pm 2\eps)$.
    By Lemma~\ref{lem:SampleL}, every light edge $e$ is returned with probability $\frac{\eps\cdot |S|}{4n\cdot \ox\cdot m(S)}$. 
    By Lemma~\ref{lem:SampleH}, every heavy edge $e'$ is returned with probability $\frac{(1\pm \eps)\eps|S|}{4 n\cdot \ox \cdot m(S)}$.
Therefore, for every two edges $e,e'$ in the graph,  $\frac{P(e)}{P(e')} \in (1\pm 2\eps)$.

Next, we prove a lower bound on the success probability of a single invocation of the while loop in Step~\ref{step:while-true} in \SE.
\begin{align*}\Pr[\text{an edge is returned}]
&=\frac{1}{2}\Pr[\text{\SL\ returns an edge}] \\ & \phantom{space} +\frac{1}{2}\Pr[\text{\SH\ returns an edge}]
\\&\geq \frac{1}{2}|\EL|\cdot \frac{\eps \cdot |S|}{4n \cdot \ox \cdot m(S)} +\frac{1}{2}\cdot |\EH|\cdot \frac{(1- \eps)\eps \cdot |S|}{4n \cdot \ox\cdot m(S)}\\
& \geq \frac{1}{2}\cdot \frac{(1-\eps)\cdot \eps |S| \cdot m}{4n \cdot \ox\cdot m(S)} =\frac{(1- \eps)\eps}{8\gamma \ox} \geq \frac{\eps}{192 x} \;,
\end{align*}
where the second inequality is due to Claim~\ref{clm:og}, i.e. $\gamma\leq 12$.
Hence, the expected number of invocations until an edge is returned is $O(\ox/\eps)$.

\end{proof}

 \bibliography{all_bib_combined}

\end{document}